 \newtheorem{stdef}{Definition}
 \newtheorem{prop}{Proposition}
\begin{document}

\title{\LARGE \textbf{Pricing and Hedging Asian Basket Options\\
with Quasi-Monte Carlo Simulations}}

\author{Nicola \textsc{Cufaro Petroni}\\
Dipartimento di Matematica and \textsl{TIRES}, Universit\`a di Bari\\
\textsl{INFN} Sezione di Bari\\
via E. Orabona 4, 70125 Bari, Italy\\
\textit{email: cufaro@ba.infn.it}\\
\\
Piergiacomo \textsc{Sabino}\\
Dipartimento di Matematica, Universit\`a di Bari\\
via E. Orabona 4, 70125 Bari, Italy \\
\textit{email: sabino@dm.uniba.it}}

\maketitle

\abstract{
% The abstract\index{abstract} should summarize the contents of the  paper
% in at least 70 and at most 150 words;
%
In this article we consider the problem of pricing and hedging
high-dimensional Asian basket options by Quasi-Monte Carlo
simulation. We assume a Black-Scholes market with time-dependent
volatilities and show how to compute the deltas by the aid of the
Malliavin Calculus, extending the procedure employed by Montero and
Kohatsu-Higa~\cite{MKH2003}. Efficient path-generation algorithms,
such as Linear Transformation and Principal Component Analysis,
exhibits a high computational cost in a market with time-dependent
volatilities. We present a new and fast Cholesky algorithm for block
matrices that makes the Linear Transformation even more convenient.
Moreover, we propose a new-path generation technique based on a
Kronecker Product Approximation. This construction returns the same
accuracy of the Linear Transformation used for the computation of
the deltas and the prices in the case of correlated asset returns
while requiring a lower computational time. All these techniques can
be easily employed for stochastic volatility models based on the
mixture of multi-dimensional dynamics introduced by Brigo et
al.~\cite{BMR,BMR04}.}
%%%%%%%%%%%%%%%%%%%%%%%%%%%%%%%%%%%%%%%%%%%%%%%%%%%%%%%%%%%%%%%%%%%%%%
%
%%%%%%%%%%%%%%%%%%%%%%%%%%%%%%%%%%%%%%
\section {Introduction and Motivation}\label{sec:intro}
%This paper aims to extend the studies presented in a series of
%manuscripts by Dahl and Benth~\cite{DB2002} and Imai and Tan
%\cite{IT2006}.

In a recent paper Dahl and Benth~\cite{DB2002} have investigated the
efficiency and the computational cost of the Principal Component
Analysis (PCA) used in the Quasi-Monte Carlo (QMC) simulations for
the pricing of high-dimensional Asian basket options in a
multi-dimensional Black-Scholes (BS) model with constant
volatilities. In particular they have shown the essential role of
the Kronecker product for a fast implementation as well as for the
analysis of variance (ANOVA) in order to identify the effective
dimension (see below). Indeed the convergence rate of the QMC method
is $O\left(N^{-1}\log^dN\right)$, where $N$ is the number of
simulation trials, and $d$ the nominal dimension of the problem.
This implies that the theoretically higher asymptotic convergence
rate of QMC could not be achieved for practical purposes in high
dimensions. On the other hand, some  applications in finance (see
Paskov and Traub~\cite{PT1995}) have shown that QMC provides a
higher accuracy than standard Monte Carlo (MC), even for high
dimensions.

To explain the success of QMC in high dimensions Caflisch et al.
\cite{CMO1997} have introduced two notions of effective dimensions
based on the ANOVA of the integrand function. Consider an integrand
function $f$ and a MC problem with nominal dimension $d$. Let
$\mathcal{A}=\{1,\dots,d\}$ denote the labels of the input variables
of the function $f$: then the effective dimension of $f$, in the
\emph{superposition sense}, is the smallest integer $d_S$ such that
$\sum_{|u|\le d_S} \sigma^2(f_u)\ge p\sigma^2(f)$, where $f_u$ is a
function with variables in the set $u\subseteq \mathcal{A}$,
$\sigma^2(\cdot)$ denotes the variance of the given function, $|u|$
is the cardinality of the set and $0\le p\le 1$, for instance
($p=0.99$). The effective dimension of $f$, in the \emph{truncation
sense} is the smallest integer $d_T$ such that
$\sum_{u\subseteq\{1,2,\dots,d_T\}} \sigma^2(f_u) = p\sigma^2(f)$.
Essentially, the truncation dimension indicates the number of
important variables which predominantly capture the given function
$f$. The superposition dimension takes into account that for some
integrands, the inputs might influence the outcome through their
joint action within small groups.

The PCA decomposition only permits a dimension reduction without
taking into account the particular payoff function of a European
option. In contrast, Imai and Tan~\cite{IT2006} have proposed a
general dimension reduction construction, named Linear
Transformation (LT), that depends on the payoff function and that
minimizes the effective dimension in the truncation sense. They have
shown that the LT approach is more accurate than the standard PCA,
but has a higher computational cost. In a previous paper one of the
authors~\cite{Sab08} has discussed how to implement this technique
quickly and -- with a slower computer -- has obtained computational
times that are about $30$ times smaller that those originally
presented by Imai and Tan \cite{IT2006}.

In the present study we consider time-dependent volatilities: as a
consequence it is not possible to rely on the properties of the
Kronecker product, and the problem is computationally more complex.
In order to simplify the computational complexity, we present a fast
Cholesky (CH) decomposition algorithm tailored for block matrices
that remarkably reduces the computational cost. Moreover, we present
a new path-generation technique based on the Kronecker Product
Approximation (KPA) of the correlation matrix of the
multi-dimensional Brownian path that returns a suboptimal ANOVA
decomposition with a substantial advantage from the computational
point of view.

Our numerical experiment consists in calculating the Randomized QMC
(RQMC) estimation of the prices and the deltas of high-dimensional
Asian basket options in a BS market with time-dependent
volatilities. In order to compute the deltas, we extend to a
multi-assets dependence the procedure employed by Montero and
Kohatsu-Higa~\cite{MKH2003} in a single asset setting. This
procedure is based on the Malliavin Calculus and allows a certain
flexibility that can enhance the localization technique introduced
by Fourni\'e et al. \cite{FLLL1999}. As far as the computation of
Asian options prices is concerned, the KPA and LT approaches are
tested both in terms of accuracy and computational cost. We
demonstrate that the LT construction becomes more efficient than the
PCA even from the computational point of view, provided we use the
CH algorithm that we present and the approach described
in~\cite{Sab08}. The KPA and the PCA constructions perform equally
in terms of accuracy, with the former one requiring a considerably
shorter computational time. However, the KPA and the LT display the
same accuracies in the computation of the deltas. Moreover, we
compare  our simulation experiment -- also using the standard CH and
the PCA decomposition methods -- with pseudo-random and Latin
Hypercube Sampling (LHS) generators.

Remark finally that all the methods described here can accommodate a
market with stochastic volatility where the evolution of the risky
securities is modeled by a mixture of multi-dimensional dynamics as
in the papers by Brigo et al.~\cite{BMR,BMR04}. It is noteworthy to
say that none of these constructions can be applied to Heston-like
multi-dimensional stochastic volatility models. In principle, we
might still use the LT for the Euler discretization of the Heston
model, but this could be no longer applicable with more realistic
schemes that involve discrete random variables as proposed for
instance by Alfonsi~\cite{Al05}.

The paper is organized as follows: Section \ref{sec:Asian} describes
 Asian options. Section \ref{sec:Path} discusses some
path-generation techniques and in particular, presents the fast CH
algorithm and the KPA construction. Section \ref{sec:Num} shows the
numerical tests for the Asian option pricing. Section \ref{sec:Mall}
explains how to represent the deltas of Asian basket options as
expected values with the aid of Malliavin Calculus and shows the
estimated values by RQMC. Section \ref{sec:concl} summarizes the
most important results and concludes the paper.
%%%%%%%%%%%%%%%%%%%%%%%%%%%%%%%%%%%%%%

%%%%%%%%%%%%%%%%%%%%%%%%%%%%%%%%%%%%%%
\section{Asian Basket Options}
\label{sec:Asian} Assume a multi-dimensional BS market with $M$
risky securities and one risk-free asset. Denote $\mathbf{B}\left(
t\right) =\left( B_{1}\left( t\right) ,\dots ,B_{M}\left( t\right)
\right)$  an $M$-dimensional Brownian motion (BM) with correlated
components and $(\mathcal{F}_t)_{t\ge 0}$ the filtration generated
by this BM. Moreover, denote $\rho_{ik}$  the
constant instantaneous correlation between $B_{i}(t)$ and
$B_{k}(t)$, $S_{i}\left( t\right)$ the $i$-th asset price at time
$t$, $\sigma _{i}\left( t\right)$ the instantaneous time-dependent
volatility of the $i$-th asset return and $r$ the continuously
compounded risk-free rate. In the risk-neutral probability, we
assume that the dynamics of the risky assets are
\begin{equation}\label{eq:sec1:1}
dS_{i}\left( t\right) =rS_{i}\left( t\right) dt+\sigma _{i}\left(
t\right)S_{i}\left( t\right)\,dB_{i}\left( t\right) ,\qquad
i=1,\dots ,M.
\end{equation}%
\noindent The solution of Equation (\ref{eq:sec1:1}) is
\begin{equation}
S_{i}\left( t\right) =S_{i}\left( 0\right) \exp\left[
\int_0^t\left( r- \frac{\sigma _i^2\left( s\right)}{2}\right)ds
+\int_0^t\sigma _i\left( s\right)dB_{i}\left( s\right) \right]
,\quad i=1,...,M.\label{eq:sec1:2}
\end{equation}
Discretely monitored Asian basket options are derivative contracts
that depend on the arithmetic mean of the prices assumed by a linear
combination of the underlying securities at precise times
$t_1<t_2\dots<t_N=T$, where $T$ is the maturity of the contract. By
the risk-neutral pricing formula (see for instance Lamberton and
Lapeyre~\cite{LL96}) the fair price at time $t$ of the contract is
\begin{equation}\label{eq:sec1:3}
a\left( t\right) =e^{r(T-t)}\mathbb{E}\left[\left(
\sum_{i=1}^{M}\sum_{j=1}^{N}w_{ij}\,S_{i}\left( t_{j}\right)
-K\right)^+\Bigg|\mathcal{F}_t\right],
\end{equation}%
\noindent with the assumption that $\sum_{i,j}w_{ij}=1$.

Pricing Asian options by simulation hence requires the discrete
averaging of the solution (\ref{eq:sec1:2}) at a finite set of times
$\{t_1,\dots,t_N\}$. This sampling procedure yields
\begin{equation}\label{1.2.3}
S_i(t_j) = S_i(0)\exp\bigg[\int_0^{t_j}\left(r -
\frac{\sigma^2_i(t)}{2}\right)dt + Z_i(t_j)\bigg]\quad i=1,\dots,M,
j=1,\dots, N,
\end{equation}
\noindent where the components of the vector
$$\left(Z_1(t_1),\ldots,
Z_1(t_N);\,Z_2(t_1),\ldots,Z_2(t_N);\,\ldots;\,Z_M(t_1),\ldots,Z_M(t_N)\right)^T$$
are $M\times N$ normal random variables with zero mean and the
following covariance matrix
\begin{equation}\label{Cov1}
\Sigma_{MN} = \left( \begin{array}{cccc} \Sigma(t_1) & \Sigma(t_1) &
\ldots & \Sigma(t_1)
\\ \Sigma(t_{1}) & \Sigma(t_2)& \ldots & \Sigma(t_2)
\\ \vdots & \vdots & \ddots & \vdots
\\\Sigma(t_{1}) & \Sigma(t_2) & \ldots & \Sigma(t_{N})
\end{array} \right),
\end{equation}
\noindent where the elements of the $M\times M$ submatrices
$\Sigma(t_n)$ are $\left({\Sigma}(t_n)\right)_{ik} =\int_0^{t_n}\rho
_{ik}\sigma _{i}(s)\sigma _{k}(s)ds$ with $i,k=1,\ldots,M;\,
n=1,\ldots,N$. This setting would allow time-dependent correlations
as well. In the case of constant volatilities the covariance matrix
is
\begin{equation}
\Sigma_{MN} = \left( \begin{array}{cccc} t_{1}\Sigma & t_{1}\Sigma &
\ldots & t_{1}\Sigma
\\ t_{1}\Sigma & t_2\Sigma & \ldots & t_{2}\Sigma
\\ \vdots & \vdots & \ddots & \vdots
\\t_{1}\Sigma & t_2\Sigma & \ldots & t_{N}\Sigma
\end{array} \right)\label{Cov2},
\end{equation}
\noindent where now $\Sigma$ denotes the $M\times M$ covariance
matrix of the logarithmic returns of the assets. It follows from the
last equation that the covariance matrix $\Sigma_{MN}$ can be
represented as $R\otimes\Sigma$, where $\otimes$ denotes the
Kronecker product and $R$ is the auto-covariance matrix of a single
BM. This simplification is not possible in the case of
time-dependent volatilities. We recall that the elements of $R$ are
\begin{equation}\label{brown}
R_{ln}=t_l\wedge t_n,\quad l,n=1,\dots N.
\end{equation}
\noindent $R$ has the peculiarity to be invariant for a reflection
about the diagonal.
\begin{stdef}[Boomerang Matrix] Let $B\in\mathbb{R}^{n_B\times n_B}$ be a square matrix and let
$\mathbf{b}=\left(b_1,\dots,b_{n_B}\right)\in\mathbb{R}^{n_B}$. $B$
is a boomerang matrix if
\begin{equation}\label{eq:boom}
B_{hp}=b_{h\wedge p},\quad h,p=1,\dots,n_B.
\end{equation}
We call $\mathbf{b}$ the elementary vector associated to $B$.
\end{stdef}

\noindent As a consequence $R$ is boomerang, and in general the
auto-covariance matrix of every Gaussian process is boomerang. This
definition can also be extended to block matrices as follows.
\begin{stdef}[Block Boomerang Matrix]
Partition the rows and the columns of a square matrix
$B\in\mathbb{R}^{n_B\times n_B}$ to obtain:
\begin{equation}
B = \left( \begin{array}{ccc} B_{11} & \ldots & B_{1P}
\\ \vdots &  \ddots & \vdots
\\B_{P1} & \ldots & B_{PP}
\end{array} \right),
\end{equation}
\noindent where for $h,p=1,\dots,P$, $B_{hp}\in\mathbb{R}^{D\times
D}$ designates the $(h,p)$ square submatrix and $n_B=P\times D$.
Given $P$ matrices $B_1,\dots,B_P$ with $B_h\in\mathbb{R}^{D\times
D},h=1,\dots,P$, $B$ is a boomerang block  matrix if
\begin{equation}\label{eq:boomBlock}
B_{hp}=B_{h\wedge p},\quad h,p=1,\dots,n_B.
\end{equation}
We call $\mathbf{b}=\left(B_1,\dots,B_P\right)^T$ the elementary
block vector associated to $B$.
\end{stdef}

\noindent From these definitions we have that $\Sigma_{MN}$ is block
boomerang.

The payoff at maturity of the Asian basket option now is $a(T) =
\left(g(\mathbf{Z})-K\right)^+$ with
\begin{equation}
g(\mathbf{Z})=\sum_{k=1}^{M\times N} \exp\left (\mu_k + Z_k
\right)\label{1.2.6}
\end{equation}
where $\mathbf{Z}\sim\mathcal{N}(0,\Sigma_{MN})$ and
\begin{equation}
\mu_k = \ln(w_{k_1k_2}S_{k_1}(0)) + rt_{k_2}-\int_0^{t_{k_2}}
\frac{\sigma_{k_1}^2(t)}{2}\,dt
\end{equation}\label{1.2.8}

\noindent with $k_1=(k-1)\bmod M;\, k_2 = \lfloor(k-1)/M\rfloor+1;\,
k=1,\dots,M$, where $\lfloor x\rfloor$ denotes the greatest integer
less than or equal to $x$.
%%%%%%%%%%%%%%%%%%%%%%%%%%%%%%%%%%%%%%

%%%%%%%%%%%%%%%%%%%%%%%%%%%%%%%%%%%%%%
\section{Path-generation Techniques}\label{sec:Path}
From the previous discussion it comes out that the pricing of Asian
basket options by simulation requires an averaging on the sample
trajectories of an $M$-dimensional BM. In general, if
$\mathbf{Y}\sim \mathcal{N}(0,\Sigma_Y)$ and
$\mathbf{X}\sim\mathcal{N}(0,I)$ are two $N$-dimensional Gaussian
random vectors, we will alawys be able to write
$\mathbf{Y}=C\mathbf{X}$, where $C$ is a matrix such that:
\begin{equation}
    \Sigma_Y=C C^T.\label{prob}
\end{equation}
and the core problem consists in finding the matrix $C$. In our case
$\Sigma_Y$ coincides with $\Sigma_{MN}$ of Equation (\ref{Cov1}).
The accuracy of the standard MC method does not depend on the choice
of the matrix $C$ because the order of the random variables is not
important. However, a choice of $C$ that reduces the nominal
dimension would improve the efficiency of the (R)QMC. In the
following we discuss some possibilities.
%%%%%%%%%%%%%%%%%%%%%%%%%%%%%%%%%%%%%%%%%%%%%%%%%%%%%%
%%%%%%%%%%%%%%%%% CHOLESKY
%%%%%%%%%%%%%%%%%%%%%%%%%%%%%%%%%%%%%%%%%%%%%%%%%%%%%%%%%%%%%%%%%%%55
\subsection{Cholesky Construction}\label{sec:Chol}
The CH decomposition simply finds the matrix $C$ among all the
\emph{lower triangular matrices}. In the case of constant
volatilities the matrix $\Sigma_{MN}$ is the Kronecker product of
$R$ and $\Sigma$, and the Kronecker product shows compatibility with
the CH decomposition (see Pitsianis and Van Loan~\cite{PV1993}).
Indeed, denote $C_{\Sigma_{MN}}$, $C_{R}$ and $C_{\Sigma}$ the CH
matrices associated to $\Sigma_{MN}$, $R$ and $\Sigma$ respectively;
we then have
\begin{equation}\label{CholKron}
    C_{\Sigma_{MN}}=C_{R}\otimes C_{\Sigma}.
\end{equation}
This now allows a remarkable reduction of the computational cost: it
turns out in fact that a $O\left((M\times N)^3\right)$ computation
is reduced to a $O\left(M^3\right)+O\left(N^3\right)$ one.

When time-dependent volatilities are considered, however, we can no
longer use these properties of the Kronecker product.  In any case,
since $\Sigma_{MN}$ is a block boomerang  matrix, we can use the
following result:
\begin{prop}\label{CholBlocInv} Let $B\in\mathbb{R}^{n_B\times n_B}$  be a block boomerang matrix
and let $\left(B_1,\dots,B_P\right)^T$, where
$B_h\in\mathbb{R}^{D\times D},h=1,\dots,P$ with $n_B=P\times D$, be
its associated elementary block vector. $C_B$, the CH matrix
associated to B, is given by:
\begin{equation}
C_{B}= \left( \begin{array}{cccc} C_{1}& 0 & \ldots &0\\
\vdots & C_2 & \ddots &0\\
\vdots & \vdots & \ddots & \vdots
\\C_{1} & C_2 & \ldots & C_{P}\
\end{array} \right)\label{Chol2}
\end{equation}
where the $D\times D$ blocks $C_h$, $h=1,\dots,P$ are
\begin{equation}
C_h = \mathrm{Chol}\left(B_h-B_{h-1}\right)
\end{equation}
with $\mathrm{Chol}$ denoting the CH factorization, and we assume
$B_0=0$.
\end{prop}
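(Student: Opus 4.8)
The plan is to verify directly that the proposed lower-triangular matrix $C_B$ in \eqref{Chol2} satisfies $C_B C_B^T = B$, and then invoke uniqueness of the Cholesky factor. First I would fix the block structure: write $C_B$ as the block matrix whose $(h,p)$ entry is $C_p$ when $p \le h$ and $0$ when $p > h$, where $C_h = \mathrm{Chol}(B_h - B_{h-1})$ with $B_0 = 0$. The key observation, which I would isolate as a preliminary step, is that the positive-definiteness of $B$ forces each difference $B_h - B_{h-1}$ to be positive definite, so that each Cholesky factor $C_h$ is well defined; this is where the boomerang structure really pays off, since the telescoping differences of the elementary blocks are exactly the pieces being factored.

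Next I would compute the $(h,p)$ block of the product $C_B C_B^T$ and check it equals $B_{hp} = B_{h \wedge p}$. By the triangular shape, the $(h,p)$ block of $C_B C_B^T$ is $\sum_{q} (C_B)_{hq} (C_B^T)_{qp} = \sum_{q \le \min(h,p)} C_q C_q^T$, because the $(h,q)$ block of $C_B$ is $C_q$ exactly for $q \le h$ and the $(q,p)$ block of $C_B^T$ is $C_q^T$ exactly for $q \le p$. Hence the sum runs over $q = 1, \dots, h \wedge p$, giving $\sum_{q=1}^{h \wedge p} C_q C_q^T$. Since each $C_q C_q^T = B_q - B_{q-1}$ by construction, this sum telescopes to $B_{h \wedge p} - B_0 = B_{h \wedge p}$, which is precisely the required block $B_{hp}$ by the block boomerang property \eqref{eq:boomBlock}. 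This establishes $C_B C_B^T = B$.

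Finally, since $C_B$ is lower triangular with positive-definite diagonal blocks $C_q$ (each itself the genuine Cholesky factor of a positive-definite matrix, hence lower triangular with positive diagonal), $C_B$ is lower triangular with a strictly positive diagonal overall. By uniqueness of the Cholesky decomposition of a positive-definite matrix among lower-triangular factors with positive diagonal, $C_B$ must coincide with the Cholesky factor $C_B$ of $B$, completing the proof. The main obstacle I anticipate is the preliminary step: carefully justifying that each telescoping difference $B_h - B_{h-1}$ is positive definite, so that the factorization is legitimate at every block and the diagonal of $C_B$ is genuinely positive. One would argue this from the fact that the leading principal block submatrices of $B$ are themselves block boomerang and positive definite, so their Schur complements reduce exactly to these differences; making that reduction explicit is the one place where more than bookkeeping is required.
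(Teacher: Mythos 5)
Your proof is correct and follows essentially the same route as the paper's: both verify block-wise that $C_B C_B^T = B$ via the telescoping sum $\sum_{q=1}^{h\wedge p} C_q C_q^T = \sum_{q=1}^{h\wedge p}\left(B_q - B_{q-1}\right) = B_{h\wedge p}$. Your additional steps --- positive definiteness of the increments $B_h - B_{h-1}$ (via Schur complements of the leading block principal submatrices) and uniqueness of the lower-triangular factor with positive diagonal --- are sound refinements of points the paper's proof leaves implicit.
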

\begin{proof}
Consider the $h^{\mathrm{th}}$ row of $C_B$ and the
$m^{\mathrm{th}}$ row of its transposed matrix; we then have
\begin{eqnarray*}
\left(C_1,\dots,C_h,0,\dots,0\right)^T \cdot
\left(C_1^T,\dots,C_m^T,0,\dots,0\right)^T&=&\sum_{l=1}^{h\wedge
m}C_lC_l^T\\
&=&\sum_{l=1}^{h\wedge m}(B_l-B_{l-1})=B_{h\wedge m}
\end{eqnarray*}
\noindent and this concludes the proof.
\end{proof}
%%%%%%%%%%%%%%%%%%%%%%%%%%%%%%%%%%%%%%%%%%%%%%%%%%%%%%
%%%%%%%%%%%%%%%%% PCA
%%%%%%%%%%%%%%%%%%%%%%%%%%%%%%%%%%%%%%%%%%%%%%%%%%%%%%%%%%%%%%%%%%%55
\subsection{Principal Component Analysis}
Acworth et al.~\cite{ABG1998} have proposed a path generation
technique based on the PCA. Following this approach we consider the
spectral decomposition of $\Sigma_{MN}$
\begin{equation}\label{5.2.1}
   \Sigma_{MN} = E\Lambda E^T = (E \Lambda^{1/2}) (E
   \Lambda^{1/2})^T,
\end{equation}
\noindent where $\Lambda$ is the diagonal matrix of all the positive
eigenvalues of $\Sigma_{MN}$ sorted in decreasing order and $E$ is
the orthogonal matrix ($EE^T=I$) of all the associated eigenvectors.
The matrix $C$ solving Equation (\ref{prob}) is then
$E\Lambda^{1/2}$. The amount of variance explained by the first $k$
principal components is the ratio: $\frac{\sum_{i=1}^k
\lambda_i}{\sum_{i=1}^d \lambda_i}$ where $d$ is the rank of
$\Sigma_{MN}$. The PCA construction permits the statistical ranking
of the normal factors, while this is not possible by the CH
decomposition. For the market with constant volatilities, the
Kronecker product reduces this calculation into the computation of
the eigenvalues and vectors of the two smaller matrices $R$ and
$\Sigma$. All these simplifications are no longer valid for the
time-dependent volatilities. However, we can reduce the
computational cost for the PCA decomposition in the following way.

Take $M_1,M_2,M_3$ and $M_4$, respectively $p\times p, p\times q,
q\times p$ and $q\times q$ matrices, and suppose that $M_1$ and
$M_4$ are invertible. Assume
\begin{equation*}
M = \left(\begin{array}{cc} M_1 & M_2\\ M_3 & M_4
\end{array}\right)
\end{equation*}
and define $S_1=M_4-M_3M_1^{-1}M_2$ and $S_4=M_1-M_2M_4^{-1}M_3$,
the Schur complements of $M_1$ and $M_4$, respectively. Then by
Schur's lemma the inverse $M^{-1}$ is:
\begin{equation}
M^{-1} = \left(\begin{array}{cc} S_4 & -M_1^{-1}M_2S_1^{-1}\\
-M_4^{-1}S_4^{-1} & S_1^{-1}
\end{array}\right).
\end{equation}
Taking into account the previous result it is possible to prove the
following proposition
\begin{prop}\label{PCAInv}
Let $B\in\mathbb{R}^{n_B\times n_B}$  be a block boomerang matrix
and let $\left(B_1,\dots,B_P\right)^T$, where
$B_h\in\mathbb{R}^{D\times D},h=1,\dots,P$ with $n_B=P\times D$, be
its associated elementary block vector. The inverse of $B$ is
symmetric block tri-diagonal. The blocks on the lower (and upper)
diagonal are $T_{l}=-\left(B_{l+1}-B_l\right)^{-1}$, $l=1,\dots,P-1$
while those on the diagonal are
$D_{m}=\left(B_m-B_{m-1}\right)^{-1}\left(B_{m+1}-B_{m-1}\right)\left(B_{m+1}-B_{m}\right)^{-1}$,
$m=1,\dots,P$, with the assumption that $B_{0}=B_{N+1}=0$:
\begin{equation}\label{in}
    \begin{array}{c}
    B^{-1} =
    \left(
    \begin{array}{ccccc}
        D_1 & T_1 & 0 & \ldots & 0
        \\
        T_1 & D_2 & T_2 & \ddots & \vdots
        \\
        0 & T_2 &  \ddots & \ddots & 0
        \\
        \vdots & \vdots & \ddots & \ddots & T_{P-1}
        \\ 0 & 0 & 0 & T_{P-1}& D_P
        \end{array}
    \right)
    \end{array}
\end{equation}

%\begin{equation}\label{in}
%\begin{array}{c}
%  B^{-1} =
%  \left( \begin{array}{cccccc}
%D_1 & T_1 & 0 & \ldots & \ldots & 0
%\\ T_1 & D_2 & T_2 & 0 &
%\ldots & \vdots
%\\ 0 & T_2 &
%D_3 & T_3 & \vdots
%\\ \vdots & 0 &  T_3 &
% \ddots & \ddots & \vdots
%\\ \vdots & \vdots & \ddots & \ddots & D_{P-1} &
%T_{P-1}
%\\ 0 & 0 & 0 & 0 & T_{P-1}& D_P
%\end{array} \right)
%\end{array}
%\end{equation}
\end{prop}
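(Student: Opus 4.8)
The plan is to verify directly that the proposed block tri-diagonal matrix is the inverse of $B$ by computing the product of $B$ with the matrix on the right-hand side of \refeq{in} and checking that it equals the identity. Since $B$ is block boomerang with $B_{hp}=B_{h\wedge p}$, its rows have a very structured form, and multiplying such a matrix by a tri-diagonal matrix only couples each block-row of $B$ to at most three consecutive block-columns of the candidate inverse. First I would fix a block-row index $h$ of $B$ (whose entries are $B_{h\wedge 1},B_{h\wedge 2},\dots,B_{h\wedge P}$) and a block-column index $m$ of the candidate inverse, and write out the block product $\sum_{p} B_{h\wedge p}\,(B^{-1})_{pm}$, noting that the inner sum has only the few nonzero terms coming from the tri-diagonal structure, namely $p\in\{m-1,m,m+1\}$.

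The key computation is then to organize the sum by cases according to the relative sizes of $h$ and $m$, because the factor $B_{h\wedge p}$ behaves differently depending on whether $p\le h$ or $p> h$. I would treat the diagonal case $h=m$, the super/sub-diagonal interactions where $|h-m|$ is small, and the generic off-diagonal case separately. The telescoping structure built into the definitions — $D_m$ is a product involving $(B_m-B_{m-1})^{-1}$ and $(B_{m+1}-B_m)^{-1}$ sandwiching $(B_{m+1}-B_{m-1})$, while $T_l=-(B_{l+1}-B_l)^{-1}$ — is designed so that when the three contributions $B_{h\wedge(m-1)}T_{m-1}$, $B_{h\wedge m}D_m$, and $B_{h\wedge(m+1)}T_m$ are added, the differences $B_{m+1}-B_{m-1}=(B_{m+1}-B_m)+(B_m-B_{m-1})$ split and cancel against the neighbouring $T$-terms. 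I expect the off-diagonal blocks of the product to collapse to zero and the diagonal blocks to collapse to the identity $I_D$, using the boundary convention $B_0=B_{P+1}=0$ to handle the first and last block-rows. (I note that the statement writes $B_{N+1}=0$; the consistent convention here is $B_{P+1}=0$, matching the $P\times P$ block size.)

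I would also verify symmetry, which is immediate once the entries are identified: each $T_l$ is symmetric because $B_{l+1}-B_l$ is (the $B_h$ are symmetric submatrices of a covariance matrix, hence so are their differences and inverses), and the diagonal blocks $D_m$ are symmetric because $(B_m-B_{m-1})^{-1}(B_{m+1}-B_{m-1})(B_{m+1}-B_m)^{-1}$ is a palindromic product of symmetric factors — transposing reverses the order but each factor equals its own transpose, so the whole expression is invariant. The one genuine subtlety, and the step I expect to be the main obstacle, is controlling the boundary rows and ensuring the Schur-complement identity (quoted just before the proposition) is applied in the right block grouping: rather than inverting the whole $n_B\times n_B$ matrix at once, the cleanest route is an inductive argument that peels off the last block row and column, writes $B$ in the $2\times2$ block form $\bigl(\begin{smallmatrix}M_1 & M_2\\ M_3 & M_4\end{smallmatrix}\bigr)$ with $M_1$ the leading $(P-1)\times(P-1)$ block-boomerang matrix, and checks that the Schur complement $S_1=M_4-M_3M_1^{-1}M_2$ reduces to $B_P-B_{P-1}$ after using the induction hypothesis for $M_1^{-1}$. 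Verifying that this Schur complement telescopes correctly — i.e. that $M_3 M_1^{-1} M_2$ collapses to $B_{P-1}$ — is where the real work lies, since it requires knowing the last block-row and block-column of $M_1^{-1}$ explicitly; once that is in hand, the remaining blocks of $M^{-1}$ assemble exactly into the tri-diagonal pattern of \refeq{in}.
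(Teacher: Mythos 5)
Both routes you sketch (direct verification that $B$ times the candidate matrix gives the identity, and induction on $P$ via Schur complements --- the latter being exactly what the paper gestures at, since it states Schur's lemma just before the proposition and offers no further proof) are workable in principle. But your plan as written would not close, because you commit to the wrong boundary convention at the far end. The convention $B_0=0$ is harmless, but with $B_{P+1}=0$, which you explicitly endorse, the last diagonal block of the candidate is
\begin{equation*}
D_P=(B_P-B_{P-1})^{-1}(0-B_{P-1})(0-B_P)^{-1}=(B_P-B_{P-1})^{-1}B_{P-1}B_P^{-1},
\end{equation*}
and then for any $h<P$ the $(h,P)$ block of the product is $B_hT_{P-1}+B_hD_P=B_h\left[D_P-(B_P-B_{P-1})^{-1}\right]=-B_hB_P^{-1}\neq 0$, so the product is not the identity. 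The correct bottom block is $D_P=(B_P-B_{P-1})^{-1}$: at $m=P$ the factor $(B_{P+1}-B_{P-1})(B_{P+1}-B_P)^{-1}$ must be read as the identity, not evaluated at $B_{P+1}=0$. The scalar case $P=2$, $B=(t_{l\wedge n})$, already shows this: the $(2,2)$ entry of the inverse is $1/(t_2-t_1)$, not $t_1/[t_2(t_2-t_1)]$. Notice that your own Schur step proves the correct value: since $M_2$ coincides with the last block column of $M_1$, one has $M_1^{-1}M_2=(0,\dots,0,I)^T$, hence $S_1=B_P-M_3M_1^{-1}M_2=B_P-B_{P-1}$, and the block-inversion formula then places $(B_P-B_{P-1})^{-1}$ --- not your $D_P$ --- in the bottom-right corner of $B^{-1}$. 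This contradicts the convention you adopted two sentences earlier, and you do not notice the clash. (The flaw is inherited from the statement itself, which writes $B_0=B_{N+1}=0$; but a proof must correct the convention, not reproduce it.)

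The identity that makes the interior telescoping explicit --- and which your sketch never writes down --- is
\begin{equation*}
D_m=(B_m-B_{m-1})^{-1}+(B_{m+1}-B_m)^{-1},\qquad 1\le m\le P-1,
\end{equation*}
obtained by expanding $B_{m+1}-B_{m-1}=(B_{m+1}-B_m)+(B_m-B_{m-1})$ inside the triple product. Together with the corrected $D_P=(B_P-B_{P-1})^{-1}$, each of the three cases $h<m$, $h=m$, $h>m$ of $\sum_p B_{h\wedge p}\,(B^{-1})_{pm}$ collapses in one line. This identity also repairs your symmetry argument, which as stated is invalid: $(B_m-B_{m-1})^{-1}(B_{m+1}-B_{m-1})(B_{m+1}-B_m)^{-1}$ is not a palindromic product (its two outer factors differ), so transposition reverses the order of the factors and does not by itself return the same expression; symmetry holds only because the product happens to equal the displayed sum of two symmetric matrices.
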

\noindent This property can be used to reduce the computational cost
of evaluating the PCA decomposition in the case of time-dependent
volatilities and in general for multi-dimensional Gaussian
processes. Indeed, if $B$ is a non-singular square matrix then the
eigenvalues of the $B^{-1}$ are the reciprocal of the eigenvalues of
$B$ and the eigenvectors coincide.
%%%%%%%%%%%%%%%%%%%%%%%%%%%%%%%%%%%%%%%%%%%%%%%%%%%%%%
%%%%%%%%%%%%%%%%% LT
%%%%%%%%%%%%%%%%%%%%%%%%%%%%%%%%%%%%%%%%%%%%%%%%%%%%%%%%%%%%%%%%%%%55
\subsection{Linear Transformation}
Imai and Tan~\cite{IT2006} have considered the following class of LT
as a solution of (\ref{prob}):
\begin{equation}\label{4.3.1}
    C^{\mathrm{LT}} = C^{\mathrm{Ch}} A
\end{equation}
\noindent where $C^{\mathrm{Ch}}$ is the CH matrix associated to the
covariance matrix of the normal random vector to be generated, and
$A$ is an orthogonal matrix, i.e. $AA^T=I$. The matrix $A$ is
introduced with the main purpose of minimizing the effective
dimension of a simulation problem in the truncation sense. Imai and
Tan~\cite{IT2006} have proposed to approximate an arbitrary function
$g$, such that $(g-K)^+$ is the payoff function of a European
derivative contract, with its first order Taylor expansion around
$\hat{\epsilon}$
\begin{equation}\label{4.3.12}
    g(\mathbf{\epsilon}) = g(\mathbf{\hat{\epsilon}}) +
    \sum_{l=1}^n\frac{\partial
    g}{\partial\epsilon_l}\Big|_{\mathbf{\epsilon}=\mathbf{\hat{\epsilon}}}\Delta\epsilon_l.
\end{equation}
The approximated function is linear in the standard normal random
vector $\Delta\epsilon$. Considering an arbitrary point about which
we form the expansion, such as $\mathbf{\hat{\epsilon}=0}$, we can
derive the first column of the optimal orthogonal matrix $A^*$. It
is possible to find the complete matrix by expanding $g$ about
different points and then compute the optimization algorithm. Imai
and Tan~\cite{IT2006} have set:
$\hat{\epsilon}_1=\mathbf{0}=(0,0,\dots,0),\,\hat{\epsilon}_2=(1,0,\dots,0),\dots,\,
\hat{\epsilon}_n=(1,\dots,1,0)$, where the $k$-th point has $k-1$
non-zero components. The optimization can then be formulated as
follows:
\begin{equation}\label{4.3.13}
    \max_{\mathbf{A_{\cdot k}}\in \mathbf{R^n}}\left(\frac{\partial g}
    {\partial\epsilon_k}\Big|_{\mathbf{\epsilon}=\mathbf{\hat{\epsilon}_k}}\right)^2,\quad
    k=1,\dots,n,
\end{equation}
subject to $\|\mathbf{A_{\cdot k}}\|=1$ and $\mathbf{A^*_{\cdot
j}}\cdot \mathbf{A_ {\cdot k}}=0;\, j=1,\dots,k-1;\, k\le n$. In the
case of Asian basket options we have
\begin{equation}\label{4.4.4}
    g(\mathbf{\epsilon}) = g(\mathbf{\hat{\epsilon}}) +
    \sum_{l=1}^{NM}\left[\sum_{i=1}^{NM}\exp\left(\mu_i+\sum_{k=1}^{NM}C_{ik}\hat{\epsilon}_k\right)C_{il}\right]\Delta\epsilon_l.
\end{equation}
Imai and Tan~\cite{IT2006} have proved the following result:
\begin{prop} Consider an Asian basket options in a BS model,
define:
\begin{eqnarray}
% \nonumber to remove numbering (before each equation)
  \mathbf{d^{(p)}} &=&
  \left(e^{\left(\mu_1+\sum_{k=1}^{p-1}C_{1k}^*\right)},\dots,e^{\left(\mu_{MN}+\sum_{k=1}^{p-1}C_{MN,k}^*\right)}\right)^T
  \\
  \mathbf{B^{(p)}} &=& \left(C^{\mathrm{Ch}}\right)^T\mathbf{(d^{(p)})},\quad p=1,\dots,MN.
\end{eqnarray}
\noindent Then the $p$-th column of the optimal matrix $A^*$ is
\begin{equation}
\mathbf{A_{\cdot
p}^*}=\pm\frac{\mathbf{B^{(p)}}}{\|\mathbf{B^{(p)}}\|}\quad
p=1,\dots,MN.
\end{equation}
\end{prop}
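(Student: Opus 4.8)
The plan is to collapse the constrained problem \refeq{4.3.13} for the $p$-th column into the maximization of a single squared inner product and then read off the maximizer. First I would evaluate the Taylor coefficient in \refeq{4.4.4} at the prescribed expansion point $\hat{\epsilon}_p=(1,\dots,1,0,\dots,0)$ with $p-1$ leading ones. At this point the inner sum collapses to $\sum_{k}C_{ik}(\hat{\epsilon}_p)_k=\sum_{k=1}^{p-1}C_{ik}^{*}$, where the first $p-1$ columns already carry the optimal entries $C_{ik}^{*}=(C^{\mathrm{Ch}}A^{*})_{ik}$ fixed at the earlier stages. The exponential prefactor multiplying $C_{il}$ is then precisely the $i$-th component $d_i^{(p)}=\exp(\mu_i+\sum_{k=1}^{p-1}C_{ik}^{*})$ of $\mathbf{d^{(p)}}$, so that the quantity to be maximized is $\partial g/\partial\epsilon_p|_{\hat{\epsilon}_p}=\sum_i d_i^{(p)}C_{ip}$.

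The decisive algebraic step is to expose the dependence on the free column $\mathbf{A_{\cdot p}}$. Using $C=C^{\mathrm{Ch}}A$ from \refeq{4.3.1} one has $C_{ip}=(C^{\mathrm{Ch}}\mathbf{A_{\cdot p}})_i$, whence
\begin{equation*}
\frac{\partial g}{\partial\epsilon_p}\Big|_{\hat{\epsilon}_p}
=(\mathbf{d^{(p)}})^{T}C^{\mathrm{Ch}}\,\mathbf{A_{\cdot p}}
=\big((C^{\mathrm{Ch}})^{T}\mathbf{d^{(p)}}\big)^{T}\mathbf{A_{\cdot p}}
=\mathbf{B^{(p)}}\cdot\mathbf{A_{\cdot p}}.
\end{equation*}
Thus the objective is $(\mathbf{B^{(p)}}\cdot\mathbf{A_{\cdot p}})^{2}$ with $\mathbf{B^{(p)}}$ a fixed vector independent of $\mathbf{A_{\cdot p}}$. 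Under the single constraint $\|\mathbf{A_{\cdot p}}\|=1$, Cauchy--Schwarz gives $(\mathbf{B^{(p)}}\cdot\mathbf{A_{\cdot p}})^{2}\le\|\mathbf{B^{(p)}}\|^{2}$ with equality exactly when $\mathbf{A_{\cdot p}}$ is parallel to $\mathbf{B^{(p)}}$, yielding $\mathbf{A_{\cdot p}^{*}}=\pm\mathbf{B^{(p)}}/\|\mathbf{B^{(p)}}\|$. For $p=1$ the inner sum is empty, no orthogonality conditions are present, and this already settles the base case --- the first column that Imai and Tan obtain by expanding about the origin.

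The part I expect to demand real care is the remaining side-condition $\mathbf{A^{*}_{\cdot j}}\cdot\mathbf{A_{\cdot p}}=0$ for $j<p$, which the clean formula seems to bypass. To treat it I would form the Lagrangian carrying both the normalization multiplier and the orthogonality multipliers; its stationarity condition shows that the optimal direction is $\mathbf{B^{(p)}}$ with its components along the earlier columns removed, i.e.\ the normalized projection of $\mathbf{B^{(p)}}$ onto the orthogonal complement of $\mathrm{span}(\mathbf{A^{*}_{\cdot 1}},\dots,\mathbf{A^{*}_{\cdot p-1}})$. The stated identity is therefore the literal answer only once this Gram--Schmidt projection is folded into the sequential construction of the $\mathbf{B^{(p)}}$: since $\mathbf{A^{*}_{\cdot j}}\propto\mathbf{B^{(j)}}$ gives $\mathbf{B^{(p)}}\cdot\mathbf{A^{*}_{\cdot j}}=\pm(\mathbf{d^{(p)}})^{T}\Sigma_{MN}\mathbf{d^{(j)}}/\|\mathbf{B^{(j)}}\|$, which need not vanish for positive $\mathbf{d}$-vectors and positive definite $\Sigma_{MN}$, making the orthogonalization explicit (or arguing that it leaves the greedy effective-dimension objective unchanged) is the one genuinely delicate point, whereas the inner-product rewriting above is routine.
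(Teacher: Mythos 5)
Your proof is correct, and it is essentially the standard derivation behind this result; note that the paper itself never proves this proposition --- it quotes it from Imai and Tan~\cite{IT2006} and the surrounding text only adds a remark on implementation --- so the comparison is with that cited argument, which proceeds exactly as yours does: evaluate the Taylor coefficient at $\hat{\epsilon}_p$ so that only the previously fixed columns $C^*_{ik}$, $k<p$, enter the exponent, rewrite the derivative as the inner product $\mathbf{B^{(p)}}\cdot\mathbf{A_{\cdot p}}$ via $C=C^{\mathrm{Ch}}A$, and maximize by Cauchy--Schwarz under the unit-norm constraint. The subtlety you flag about the orthogonality constraints is genuine, and it is precisely how the paper deals with it: immediately after the proposition the authors state that $\mathbf{A_{\cdot p}}$ ``must be orthogonal to all the other columns'' and that this ``can be easily obtained by an incremental QR decomposition'' as in Sabino~\cite{Sab08} --- in other words, the displayed formula is the maximizer subject only to normalization, with orthogonality enforced afterwards by projection, exactly as in your Lagrangian/Gram--Schmidt reading. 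Your observation that $\mathbf{B^{(p)}}\cdot\mathbf{A^*_{\cdot j}}\propto(\mathbf{d^{(p)}})^{T}\Sigma_{MN}\mathbf{d^{(j)}}$ (using $C^{\mathrm{Ch}}(C^{\mathrm{Ch}})^{T}=\Sigma_{MN}$) need not vanish is also correct, so the proposition as stated is indeed loose on this point; your reconstruction both recovers the intended statement and makes explicit the gap that the paper papers over with the QR remark.
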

\noindent The matrices $C_{ik}^*$, $k<p$ have been already found at
the $p-1$ previous steps. $\mathbf{A_{\cdot p}}$ must be orthogonal
to all the other columns. This feature can be easily obtained by an
incremental QR decomposition as described in Sabino~\cite{Sab08}.
%%%%%%%%%%%%%%%%%%%%%%%%%%%%%%%%%%%%%%%%%%%%%%%%%%%%%%
%%%%%%%%%%%%%%%%% KPA
%%%%%%%%%%%%%%%%%%%%%%%%%%%%%%%%%%%%%%%%%%%%%%%%%%%%%%%%%%%%%%%%%%%55
\subsection{Kronecker Product Approximation}
In a time-dependent volatilities BS market the covariance matrix
$\Sigma_{MN}$ has time-dependent blocks. The multi-dimensional BM is
the unique source of risk in the BS market and the generation of the
trajectories of the $1$-dimensional BM does depend on the
volatilities. Based on these considerations, we propose to find a
constant covariance matrix among the assets $H$, in order to
approximate, in an appropriate sense, the matrix $\Sigma_{MN}$ as a
Kronecker product of $R$ and $H$.  In the following we illustrate
the proposed procedure that we label Kronecker Product Approximation
(KPA). Pitsianis and Van Loan \cite{PV1993} have proved the
following proposition.
\begin{prop}
Suppose $G\in\mathbb{R}^{m\times n}$ and
$G_1\in\mathbb{R}^{m_1\times n_1}$ with $m=m_1m_2$ and $n=n_1n_2$.
Consider the problem of finding $G_2^*\in\mathbb{R}^{m_1\times n_1}$
that realizes the minimum
\begin{equation}\label{5.4.2}
\min_{G_2\in\mathbb{R}^{m_1\times n_1}}\parallel G - G_1\otimes
G_2\parallel^2_F,
\end{equation}
\noindent where $\parallel\cdot\parallel^2_F$ denotes the Frobenius
norm. For fixed $h=1,\dots,m_2$ and $l=1,\dots,n_2$ denote
$\mathcal{R}(G)_{hl}$ the $m_1\times n_1$ matrix defined by the rows
$h,h+m_2,h+2m_2,\dots,h+(m_1-1)m_2$ and the columns
$l,l+n_2,l+2n_2,\dots,l+(n_1-1)n_2$ of the original matrix $G$. The
elements of $G_2^*$ which gives (\ref{5.4.2}) are
\begin{equation}\label{5.4.3}
(G_2^*)_{hl} =
\frac{\mathrm{Tr}\left(\mathcal{R}(G)^T_{hl}G_1\right)}{\mathrm{Tr}\left(G_1G_1^T\right)}\quad
h=1,\dots,m_2,\, l=1,\dots,n_2,
\end{equation}
where ${\mathrm{Tr}}$ indicates the trace of a matrix.
\end{prop}
\noindent In our setting, we have $G=\Sigma_{MN}$, $G_1=R$ and
$G_2=H$. We note that for any $i,j=1,\dots,N$,
$\mathcal{R}(\Sigma_{MN}))_{ij}$ is a $N\times N$ boomerang matrix.
Moreover, given two general $N\times N$ boomerang matrices $A$ and
$B$, by direct computations we can prove
\begin{equation}
\mathrm{Tr}(A^TB) = \mathrm{Tr}(AB) = \sum_{j=1}^N
\left(2(N-j)+1\right)a_{jj}b_{jj}\label{5.4.5}.
\end{equation}
Then we perform the PCA decomposition of $R\otimes H$ relying on the
properties of the Kronecker product. However, if we use the PCA
decomposition of the matrix $F = R\otimes{H}$ we do not get the
required path. In order to produce the required trajectory we take
\begin{equation}\label{eq:KPA}
  \mathbf{Z}=C^{\mathrm{KPA}}\mathbf{\epsilon}=
  C_{\Sigma_{MN}}(C_F)^{-1}E_H\Lambda_H^{1/2}\mathbf{\epsilon}
\end{equation}
where $C_{\Sigma_{MN}}$ and $C_F$ are the CH matrices associated to
$\Sigma_{MN}$ and $F$, respectively, and $E_H\Lambda_H^{1/2}$ is the
PCA decomposition of $F$. The matrix $C^{\mathrm{KPA}}$ produces the
correct covariance matrix; indeed, denoting $P=E_H\Lambda_H^{1/2}$,
we have
\begin{equation*}
  C^{\mathrm{KPA}}\left(C^{\mathrm{KPA}}\right)^T =
  C_{\Sigma_{MN}}(C_F)^{-1}PP^T\left[(C_F)^{-1}\right]^TC_{\Sigma_{MN}}^T =
  C_{\Sigma_{MN}}C_{\Sigma_{MN}}^T=\Sigma_{MN}
\end{equation*}
because $PP^T = C_FC_F^T=F$. Our fundamental assumption is that the
principal components of $\mathbf{Z}$ are not so different from those
of the normal random vector $\mathbf{Z}'$ whose covariance matrix is
$F$. We expect that the KPA decomposition would produce an effective
dimension higher than the effective dimension obtained by the PCA
decomposition, but with an advantage from the computational point of
view. Due to properties of the Kronecker product, Equation
(\ref{eq:KPA}) becomes
\begin{equation}\label{5.4.9}
\mathbf{Z} = C_{\Sigma_{MN}}\left(C_R^{-1}\otimes
C_H^{-1}\right)E_H\Lambda_H^{1/2}\mathbf{\epsilon},
\end{equation}
where $C_R$ and $C_H$ are the CH matrices of $R$ and $H$,
respectively. This matrix multiplication can be carried out quickly
by block-matrices multiplication and knowing that, due to the
Propositions \ref{CholBlocInv} and \ref{PCAInv}, $C_R^{-1}$ is a
sparse bi-diagonal matrix.
%%%%%%%%%%%%%%%%%%%%%%%%%%%%%%%%%%%%%%
%
%%%%%%%%%%%%%%%%%%%%%%%%%%%%%%%%%%%%%%
\section{Computing the Option Price}\label{sec:Num}
We will now estimate the fair price of an Asian option on a basket
of $M=10$ underlying assets with $N=250$ sampled points in the BS
model with time-dependent volatilities having the following
expression
\begin{equation}
\sigma_i(t) =
\hat{\sigma}_i(0)\,\exp\left(-t/\tau_i\right)+\sigma_i(+\infty),\quad
i=1,\dots M.
\end{equation}
The parameters chosen for the simulation are listed in Table
\ref{parameters} ($\hat{\sigma}_i(0)$ is then
$\sigma_i(0)-\sigma_i(+\infty)$).
 \begin{table}\centering
 \caption{Inputs Parameters}\label{parameters}
 \begin{tabular}{cc}
     \begin{tabular}{lll}
     \\
         \hline
         $S_{i}(0)$&=&$100, \quad\forall i=1\dots,N$\\
         $K$&$\subset$&$\{90,100,110\}$\\
         $r$ &=& $4$\%\\
         $T$&=&$1$\\
%         \hline
%     \end{tabular}
%
%     \begin{tabular}{lll}
%         \hline
         $\sigma_i(0)$ &=&$10\%+\frac{i-1}{9}40\% \quad i=1\dots,N$\\
         $\sigma_i(+\infty)$&=&$ 9\% \quad \forall i=1\dots,N $\\
         $\tau_i$&=&$1.5 \quad\forall i=1\dots,N$\\
         $\rho_{ij}$&$\subset$&$\{0, 40\} \quad i,j=1\dots,N$\\
         \hline
     \end{tabular}
 \end{tabular}
 \end{table}
 \begin{table}\centering%
 \caption{Effective Dimensions. Time-dependent  Volatilities}\label{Eff_Dim}%
   % after \\: \hline or \cline{col1-col2} \cline{col3-col4} ...
 \begin{tabular}{cc}
     \begin{tabular}{l}
     \\
         $\rho = 0\%$\\
         \begin{tabular}{llll}
             \hline
             Ch & PCA & LT & KPA \\
             $d_T > 1900$  & $d_T = 14$ & $d_T = 10$ & $d_T = 19$ \\
             \hline
             \end{tabular}
     \end{tabular}
 \\
     \begin{tabular}{l}
     \\
         $\rho = 40\%$\\
         \begin{tabular}{llll}
             \hline
             Ch & PCA & LT & KPA \\
             $d_T > 1900$  & $d_T = 9$ & $d_T = 8$ & $d_T = 11$ \\
             \hline
         \end{tabular}\\
     \end{tabular}
 \end{tabular}
 \end{table}
We implement the numerical investigation in two parts: first we test
the effectiveness of the path-generation constructions on dimension
reduction and compute their computational times, and then we compare
the accuracy of the simulation.

The Table \ref{Eff_Dim} shows the effective dimensions obtained by
all the path-generation methods ($p=0.99$). The LT construction
provides the lowest effective dimension, while the PCA decomposition
performs almost as well as the LT approach for the correlation case
only, and the KPA returns a slightly higher effective dimension. The
CH decomposition collects $98.58\%$ and $98.70\%$ of the total
variance for $d_T\approx 2000$ for the uncorrelated and correlated
cases, respectively. To have a more accurate comparison, Table
\ref{ElTimes} displays the elapsed times computed in
Sabino~\cite{Sab08} using an \emph{ad hoc} incremental QR algorithm
for the LT and assuming constant volatilities each equal to
$\sigma_i(0)$ of Table \ref{parameters}. The computation is
implemented in MATLAB running on a laptop with an Intel Pentium M,
processor 1.60 GHz and 1 GB of RAM. We compute $50$ optimal columns
for the LT technique.
 \begin{table}\centering
 \caption{Computational Times in Seconds}\label{ElTimes}
 \begin{tabular}{cc}
 \\
 Constant Volatilities
 \\
 \\
     \begin{tabular}{ll}
     $\rho = 0\%$ & $\rho = 40\%$\\
         \begin{tabular}{lll}
             \hline
             Ch & PCA & LT  \\
             $0.60$  & $25.77$ & $71.14$ \\
             \hline
         \end{tabular}
         &
         \begin{tabular}{lll}
             \hline
             Ch & PCA & LT  \\
             $0.59$  & $25.55$ & $71.02$ \\
             \hline
         \end{tabular}
     \end{tabular}
     \\
     \\
Time-dependent  Volatilities
     \\
     \\
     \begin{tabular}{ll}
         $\rho = 0\%$ & $\rho = 40\%$\\
         \begin{tabular}{llll}
             \hline
             Ch & PCA & LT & KPA \\
             $0.62$  & $565.77$ & $71.65$ & $28.25$ \\
             \hline
         \end{tabular}
         &
         \begin{tabular}{llll}
             \hline
             Ch & PCA & LT & KPA \\
             $0.62$  & $568.55$ & $71.20$ & $28.33$ \\
             \hline
         \end{tabular}
     \end{tabular}
     \\
 \end{tabular}
 \end{table}
The CH algorithm for block boomerang  matrices has almost the same
cost as the one relying on the properties of the Kronecker product.
As a consequence, the LT also requires almost the same computational
cost, while the PCA needs a time almost $20$ times higher because we
can no longer rely on the properties of the Kronecker product. In
contrast, the KPA has almost the same computational time as the PCA
in the constant volatility case and is the best performing
path-generation technique from the computational time point of view.
We have applied Proposition \ref{PCAInv} to implement the PCA and
computed the eigenvalues and eigenvectors of $\Sigma_{MN}$ relying
only on the \textit{sparse} function of MATLAB. It is noteworthy to
say that there exist algorithms tailored for the computation of the
eigenvalues and eigenvectors of tri-diagonal symmetric block
matrices that can further reduce the computational time.
 \begin{table}[tbp]\centering
 \caption{Estimated At-the Money Prices and
 Errors.}\label{PriceErr}
     \begin{tabular}{ccc}
     \\
         & $\rho = 0$ & $\rho = 40\%$\\
         \hline
         MC
         &
         \begin{tabular}{|ccc}
             & Price & RMSE\\
             Ch & $3.18200$&$0.01300$\\
             KPA & $3.12400$&$0.01300$\\
             PCA & $3.10600$&$0.01300$\\
             LT & $3.11100$&$0.01300$\\
         \end{tabular}
         &
         \begin{tabular}{cc}
             Price & RMSE\\
             $5.18900$&$0.02600$\\
             $5.19400$&$0.02600$\\
             $5.20100$&$0.02600$\\
             $5.24200$&$0.02600$\\
         \end{tabular}\\
         \hline
         LHS
         &
         \begin{tabular}{|ccc}
             & Price & RMSE\\
           Ch&  $3.12200$&$0.00750$\\
            KPA & $3.12440$&$0.00550$\\
             PCA & $3.12010$&$0.00540$\\
             LT & $3.12200$&$0.00290$\\
         \end{tabular}
         &
         \begin{tabular}{cc}
             Price & RMSE\\
             $5.20000$&$0.01200$\\
             $5.20950$&$0.00340$\\
             $5.20070$&$0.00320$\\
             $5.20090$&$0.00120$\\
         \end{tabular}\\
         \hline
         RQMC
         &
         \begin{tabular}{|ccc}
             & Price & RMSE\\
          Ch&   $3.11240$&$0.00550$\\
          KPA&   $3.12240$&$0.00086$\\
          PCA&   $3.12140$&$0.00078$\\
          LT &   $3.12230$&$0.00021$\\
         \end{tabular}
         &
         \begin{tabular}{cc}
             Price & RMSE\\
             $5.19500$&$0.01500$\\
             $5.20080$&$0.00054$\\
             $5.20080$&$0.00069$\\
             $5.20080$&$0.00019$\\
         \end{tabular}\\
         \hline
     \end{tabular}
 \end{table}

In the second part of our investigation we launch a simulation in
order to estimate the Asian option price using 10 replications each
of 8192 random points following the strategy in Imai and Tan
\cite{IT2006}. We use different random generators: standard MC, LHS
and RQMC generators. Concerning the computational times of the price
estimation, the CPU ratio between LHS and RQMC is almost $1$ while
standard MC is $1.33$ faster. Moreover, the LT construction needs a
time that is almost $1/30$ of the total computational time of the
LHS or RQMC simulation. As a RQMC generator we use a Matou\^{s}ek
scrambled version (see Matou\^{s}ek~\cite{Ma1998}) of the
$50$-dimensional Sobol´ sequence satisfying Sobol's property A (see
Sobol~\cite{Sobol76}). We pad the remaining random components out
with LHS. This hybrid strategy is intended to investigate the
effective improvement of the decomposition methods when coupled with
(R)QMC. Indeed, it can be proven that LHS gives good variance
reductions when the target function is the sum of one-dimensional
functions (see Stein~\cite{Stein87}). On the other hand, the LT
method is conceived to capture the lower effective dimension in the
truncation sense for linear combinations. As a consequence, we
should already observe a high accuracy when running the simulation
using LHS combined with LT. We expect the KPA technique to produce a
suboptimal decomposition in the sense of ANOVA, with the advantage
of a lower computational effort. Our setting is thought to test how
large is the improvement given by all the factorizations. Tables
\ref{PriceErr} and \ref{DeltaErr} present the results of our
investigation. The prices in Table \ref{PriceErr} are all in
statistical agreement. Those obtained with the CH decomposition are
almost not sensitive to the random number generator. KPA, PCA and LT
all provide good improvements both for the LHS and RQMC
implementations for all the strike prices. The LT has an evident
advantage compared to the PCA and KPA constructions in the
uncorrelated case. In contrast, we observe that the KPA and
PCA-based simulations give almost the same accuracy, both assuming
uncorrelated and correlated asset returns. Considering the total
computational cost and accuracy we observe that the KPA performs
better than the standard PCA. Moreover, all these constructions can
be employed in stochastic and local volatility models that are based
on the mixture of multi-dimensional dynamics for basket options as
done in Brigo et al.~\cite{BMR}.
%%%%%%%%%%%%%%%%%%%%%%%%%%%%%%%%%%%%5555
%%%%   COMPUTING THE SENSITIVITIES
%%%%%%%%%%%%%%%%%%%%%%%%%%%%%%%%%%%%5555
\section{Computing the Sensitivities}\label{sec:Mall}
In the financial jargon, a Greek is the derivative of an option
price with respect to a parameter. A Greek is therefore a measure of
the sensitivity of the price with respect to one of its parameters.
The deltas ($\Delta$'s) are the components of the gradient of the
discounted expected outcome of the option with respect to the
initial values of the assets. The problem of computing the Greeks in
finance has been studied by several authors. In the following we
extend the methodology employed by Montero and Kohatsu-Higa
\cite{MKH2003}, based on the use of Malliavin Calculus, to the
multi-assets case. The main difficulties of this extension lie in
the fact that the assets are now correlated and the formulas in
Montero and Kohatsu-Higa~\cite{MKH2003} cannot be directly extended
to the multi-dimensional case. Moreover, the localization technique,
introduced by Fourni\'e et al~\cite{FLLL1999}, should generally
control all the components of the multi-dimensional BM to improve
the accuracy of the estimation.  We write the dynamics
(\ref{eq:sec1:1}) with respect to a $M$-dimensional BM
$\mathbf{W}(t)$ with uncorrelated components
\begin{equation}
dS_i(t)=rS_i(t)dt +
S_i(t)\sigma_i(t)\sum_{m=1}^M\alpha_{im}(t)dW_m(t)\quad
i=1,\dots,M,
\end{equation}
where $\sum_{m=1}^M\alpha_{im}\alpha_{km}=\rho_{ik}$ and we have defined
$\sigma_{im}(t)=\sigma_i(t)\sum_{m=1}^M\alpha_{im}$.

The Malliavin calculus is a theory of variational stochastic
calculus and provides the mechanics to compute derivatives and
integration by parts of random variables (see Nualart~\cite{Nu06}
for more on Malliavin Calculus).

Denote by $D_s^1,\dots,D_s^M$ the Malliavin derivatives with respect
to the components of $\mathbf{W}(t)$, while
$\delta^{\mathrm{Sk}}=\sum_{m=1}^M\delta_m^{\mathrm{Sk}}$ represents
the Skorohod integral with $\delta_m^{\mathrm{Sk}}$ indicating the
Skorohod integral on the single $W_m(t)$. The domains of the
Malliavin derivatives and the Skorohod integral are denoted by
$\mathbb{D}^{1,2}$ and $\mathrm{dom}(\delta^{\mathrm{Sk}})$,
respectively, while $\delta^{\mathrm{Kr}}$ indicates the Kronecker
delta. We prove the following proposition.
\begin{prop}\label{prop:1}
Denote  $\mathbf{x}=\mathbf{S}(0)$,  and $G_k$ the partial
derivative
\begin{equation}
G_k=\frac{\partial m(T)}{\partial x_k}=\frac{\sum_{j=1}^N
w_{kj}S_k(t_j)}{x_k},\quad k=1,\dots,M,
\end{equation}
\noindent where $m(T)=\sum_{j=1}^N w_{kj}S_k(t_j)$. Knowing that
$a(T)\in\mathbb{D}^{1,2}$,  the $k$-th delta (the $k$-th component
of the gradient) is
\begin{equation}
\Delta_k =\frac{\partial a(0)}{\partial x_k}=e^{-rT}
\mathbb{E}\left[a'(T)G_k\right]=e^{-rT}\mathbb{E}\left[a(T)\sum_{m=1}^M\delta_m^{\mathrm{Sk}}(G_ku_m)\right],
\end{equation}
\noindent where $\mathbf{u}=(u_1,\dots,u_M)\in
\mathrm{dom}(\delta^{\mathrm{Sk}})$, $\mathbf{z}=(z_1,\dots,z_m)\in
\mathrm{dom}(\delta^{\mathrm{Sk}})$, $G_k\mathbf{u}\in
\mathrm{dom}(\delta^{\mathrm{Sk}})$ and
\begin{eqnarray*}
  \frac{z_m(s)}{\sum_{h=1}^M\int_0^Tz_h(s)D^h_s m(T)ds} &=& u_m(s) \\
  \sum_{h=1}^M\int_0^Tz_h(s)D^h_s m(T)ds &\neq& 0,\quad \mathrm{a.s.}
\end{eqnarray*}
\end{prop}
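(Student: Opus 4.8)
The plan is to split the argument into two stages: first differentiate the price with respect to the initial datum $x_k$ to obtain the intermediate representation $\Delta_k=e^{-rT}\mathbb{E}[a'(T)G_k]$, and then exploit the duality between the Malliavin derivative and the Skorohod integral to transfer the (distributional) derivative $a'(T)$ off the payoff and onto a weight, producing the final expression.

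For the first equality I would start from $a(0)=e^{-rT}\mathbb{E}[a(T)]$ with $a(T)=(m(T)-K)^+$. From the explicit solution (\ref{eq:sec1:2}) each $S_k(t_j)$ is proportional to its own initial value, so $\partial S_k(t_j)/\partial x_k=S_k(t_j)/x_k$ and hence $\partial m(T)/\partial x_k=G_k$, consistent with the stated $G_k$. The payoff $x\mapsto(x-K)^+$ is Lipschitz with derivative $\mathbf{1}_{\{x>K\}}$ off the single point $x=K$; since $m(T)$ is a positive linear combination of (correlated) lognormals, its law is absolutely continuous and $\mathbb{P}\{m(T)=K\}=0$. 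Differentiation under the expectation sign, justified by dominated convergence on the difference quotients, then gives $\Delta_k=e^{-rT}\mathbb{E}[\mathbf{1}_{\{m(T)>K\}}G_k]=e^{-rT}\mathbb{E}[a'(T)G_k]$.

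The core step is the second equality. Using the Malliavin chain rule for the Lipschitz map $(\cdot-K)^+$, one has $D_s^h a(T)=a'(T)\,D_s^h m(T)$ a.e. Multiplying by $z_h(s)$, integrating in $s$ over $[0,T]$ and summing over $h$ produces $\sum_h\int_0^T z_h(s)D_s^h a(T)\,ds=a'(T)\,\Theta$, where $\Theta=\sum_h\int_0^T z_h(s)D_s^h m(T)\,ds$ is precisely the denominator in the hypothesis, assumed nonzero a.s. Dividing by $\Theta$ and inserting $u_m=z_m/\Theta$ gives
\[
a'(T)\,G_k=\sum_{m=1}^M\int_0^T \big(G_k\,u_m(s)\big)\,D_s^m a(T)\,ds .
\]
Taking expectations and applying the Skorohod duality $\mathbb{E}\big[\int_0^T v_m(s)D_s^m a(T)\,ds\big]=\mathbb{E}[a(T)\,\delta_m^{\mathrm{Sk}}(v_m)]$ to each $v_m=G_k u_m$, legitimate because $G_k\mathbf{u}\in\mathrm{dom}(\delta^{\mathrm{Sk}})$ and $a(T)\in\mathbb{D}^{1,2}$ by hypothesis, yields $\mathbb{E}[a'(T)G_k]=\mathbb{E}[a(T)\sum_{m=1}^M\delta_m^{\mathrm{Sk}}(G_k u_m)]$; multiplying by $e^{-rT}$ closes the chain of equalities.

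The main obstacle I expect is the lack of smoothness of the payoff: both the chain rule $D_s^h a(T)=a'(T)D_s^h m(T)$ and the interchange of derivative and expectation require $(\cdot-K)^+$ to be handled with care. The clean route is to approximate it by a sequence of $C^1$ functions $F_\varepsilon$, establish the two identities for the smooth $F_\varepsilon(m(T))$, where the classical chain rule and ordinary differentiation under the integral apply, and then pass to the limit, using the absolute continuity of the law of $m(T)$ to absorb the kink and the standing hypotheses ($G_k\mathbf{u}\in\mathrm{dom}(\delta^{\mathrm{Sk}})$ and $\Theta\neq 0$ a.s.) to control the Skorohod weight uniformly in $\varepsilon$.
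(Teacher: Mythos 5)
Your proof follows essentially the same route as the paper's: the Malliavin chain rule $D_s^h a(T)=a'(T)\,D_s^h m(T)$, multiplication by $z_h(s)$ and $G_k$, integration in $s$ and summation over $h$, use of the definition of $\mathbf{u}$ (with the a.s.\ nonvanishing denominator $\Theta$) to isolate $a'(T)G_k$, and finally expectation plus Skorohod duality to obtain $\mathbb{E}\left[a(T)\sum_{m=1}^M\delta_m^{\mathrm{Sk}}(G_k u_m)\right]$. The only difference is that you also justify the first equality (differentiation under the expectation, using the absolute continuity of the law of $m(T)$) and sketch a smoothing argument for the kinked payoff, points the paper's proof passes over in silence; these additions are correct and only strengthen the argument.
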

\begin{proof}
Compute
\begin{equation}
D_s^ha(T)=a'(T)D^h_s m(T)\quad h=1,\dots,M.
\end{equation}
Multiply the above equation by $z_h(t)$ -- so that $\mathbf{z}\in
\mathrm{dom}(\delta^{\mathrm{Sk}})$ -- and by $G_k$; then sum for
all $h=1,\dots,M$ and integrate:
\begin{equation}
\sum_{h=1}^M\int_0^TG_kz_h(s)D_s^ha(T)ds=\sum_{h=1}^M\int_0^TG_kz_h(s)a'(T)D^h_s
m(T)ds.
\end{equation}
Due to the definition of $\mathbf{u}$ and to the fact that
$a'(T)G_k$ does not depend on $s$ we can write
%\begin{equation}\label{eq:proofMall}
%a'(T)G_k=\sum_{m=1}^M\int_0^T\frac{z_m(s)D_s^mm(T))ds}{\sum_{h=1}^M\int_0^Tz_h(s)D^h_s
%m(T)ds},\quad k=1,\dots,M,
%\end{equation}
\begin{equation}\label{eq:proofMall}
a'(T)G_k=\sum_{m=1}^M\int_0^Tu_m(s)G_kD_s^ma(T))ds.\quad k=1,\dots,M
\end{equation}
%
%Define $u_m(s)$ as
%\begin{equation}
%u_m(s)=\frac{z_m(s)}{\sum_{h=1}^M\int_0^Tz_h(s)D^h_s m(T)ds},\quad
%m=1,\dots,M
%\end{equation}
Finally compute the expected value of both sides of
(\ref{eq:proofMall})
\begin{equation}
\mathbb{E}\left[a'(T)G_k\right]=\mathbb{E}\left[\sum_{m=1}^M\int_0^Tu_m(s)G_kD_s^ma(T)ds\right].
\end{equation}
so that by duality
\begin{equation}\label{DeltaMulti}
\Delta_k =
\mathbb{E}\left[a(T))\delta^{\mathrm{Sk}}(G_k\mathbf{u})\right]\quad
k=1,\dots,M.
\end{equation}
and this concludes the proof.
\end{proof}
Proposition \ref{prop:1} allows a certain flexibility in choosing
either the process $\mathbf{u}$, or better $\mathbf{z}$. We consider
$z_h=\alpha_k\delta^{\mathrm{Kr}}_{hk}$; $h,k=1,\dots,M$,
$\alpha_k=1, \forall k$. Namely, in order to compute the $k$-th
delta we consider only the $k$-th term of the Skorohod integral
reducing the computational cost. In particular, this choice is
motivated by the fact that in this way the localization technique
needs to control only $\delta^{\mathrm{Sk}}_{k}(\cdot)$ and then
only the $k$-th component of $\mathbf{W}(t)$. Then we define $L_k$
and calculate for $k=1,\dots,M$
\begin{eqnarray}\label{prova}
L_k\!\!\!&=&\!\!\!\int_0^TD_s^km(T)ds=\sum_{i=1}^M\sum_{j=1}^Nw_{ij}S_i(t_j)\int_0^{t_j}\sigma_{ik}(s)ds,\label{prova1}\\
\int_0^TD^k_sG_kds\!\!\! &=&\!\!\!
\sum_{j=1}^Nw_{jk}S_k(t_j)\int_0^{t_j}\sigma_{kk}(s)ds=\sum_{j=1}^Nw_{jk}S_k(t_j)\int_0^{t_j}\sigma_{k}(s)ds,\label{prova2}\\
\int_0^TD^k_sL_kds\!\!\!
&=&\!\!\!\sum_{j=1}^Nw_{ij}S_i(t_j)\left(\int_0^{t_j}\sigma_{ik}ds\right)^2,\label{prova3}
\end{eqnarray}
and hence
\begin{equation}\label{DelAsMultiDisCorr}
\Delta_k =
\mathbb{E}\left[a(T)\delta_k^{\mathrm{Sk}}\left(\frac{G_k}{L_K}\right)\right],\quad
k=1,\dots,M.
\end{equation}
Due to the properties of the Skorohod integral we have for
$\,k=1,\dots,M$
\begin{equation}
\delta_k\left(\frac{G_k}{L_K}\right)=\frac{G_k}{L_K}\,W_k(T)-
\frac{1}{L_k^2}\left(L_k\int_0^TD^k_sG_kds-G_k\int_0^TD^k_sL_kds\right),
\end{equation}
%\begin{eqnarray}
%\int_0^TD^k_sAds
%=\frac{\sum_{j=1}^Nw_{jk}S_k(t_j)\int_0^{t_j}\sigma_{kk}(s)ds\sum_{i=1}^M\sum_{j=1}^Nw_{ij}S_i(t_j)\int_0^{t_j}\sigma_{ik}(s)ds}
%{x_k\left(\sum_{i=1}^M\sum_{j=1}^Nw_{ij}S_i(t_j)\int_0^{t_j}\sigma_{ik}(s)ds\right)^2}+\nonumber\\
%-\frac{\sum_{j=1}^Nw_{jk}S_k(t_j)\sum_{i=1}^M\sum_{j=1}^Nw_{ij}S_i(t_j)\left(\int_0^{t_j}\sigma_{ik}ds\right)^2}
%{x_k\left(\sum_{i=1}^M\sum_{j=1}^Nw_{ij}S_i(t_j)\int_0^{t_j}\sigma_{ik}(s)ds\right)^2}.\nonumber\\
%\end{eqnarray}
With another choice of $\mathbf{z}$, for instance $z_h=\alpha_h$,
$\Delta_k$ would depend linearly on the whole $M$-dimensional BM,
making the localization technique less efficient.

We investigate the applicability of the RQMC approach to estimate
the expected value in Equation (\ref{DelAsMultiDisCorr}) for
$k=1,\dots,M$. We assume the same input parameters as in Section
\ref{sec:Num} and generate the trajectories (the values $S_i(t_j),
i=1,\dots,M, j=1,\dots,N$) in Equations (\ref{prova}-\ref{prova3})
as in Section \ref{sec:Num}. Moreover, we consider $\alpha_{im}$ as
the elements of the CH matrix associated to $\rho_{im}$,
$i,m=1,\dots,M$.
\begin{table}[tbp] \centering
\caption{At-the-money estimated $\Delta$'s ($10^{-2}$) and errors
($10^{-4}$) with RQMC.}\label{DeltaErr} \vspace{10pt}
\begin{tabular}{c}
         $\rho = 0\%$ \\
    \begin{tabular}{cccccccc}
                \multicolumn{2}{c}{LT} & \multicolumn{2}{c}{KPA}& \multicolumn{2}{c}{PCA}& \multicolumn{2}{c}{CH}\\
                \hline
                $\Delta$ &RMSE &$\Delta$ & RMSE & $\Delta$ &RMSE &$\Delta$ & RMSE\\
                \hline
                6.1832&0.80&6.1820&1.10&6.1808&0.86&6.2060&1.50\\
                6.2024&0.75&6.2126&0.90&6.2016&0.86&6.2250&1.10\\
                6.2305&0.85&6.2340&0.87&6.2341&0.99&6.2530&1.20\\
                6.2667&0.75&6.2701&0.82&6.2699&0.91&6.2830&1.40\\
                6.3081&0.60&6.3133&0.92&6.3093&0.96&6.3270&1.20\\
                6.3569&0.55&6.3595&0.97&6.3598&0.83&6.3750&1.10\\
                6.4107&0.50&6.4141&0.93&6.4103&0.78&6.4329&1.20\\
                6.4709&0.50&6.4744&0.91&6.4677&0.84&6.4920&1.40\\
                6.5338&0.50&6.5390&0.93&6.5325&0.93&6.5530&1.30\\
                6.6001&0.65&6.6060&0.78&6.6000&0.96&6.6120&1.10\\
                 \hline
            \end{tabular}
\\\\
    \end{tabular}
    \begin{tabular}{c}
        $\rho = 40\%$\\
        \begin{tabular}{cccccccccccc}
            \multicolumn{2}{c}{LT} & \multicolumn{2}{c}{KPA}& \multicolumn{2}{c}{PCA}& \multicolumn{2}{c}{CH}\\
            \hline
            $\Delta$ &RMSE &$\Delta$ & RMSE & $\Delta$ &RMSE &$\Delta$ & RMSE\\
            \hline
            5.47830&0.056&5.48410&0.110&5.48050&0.130&5.46767&1.100\\
            5.53510&0.062&5.54050&0.110&5.53740&0.140&5.52457&1.200\\
            5.59430&0.054&5.60020&0.110&5.59660&0.130&5.58730&1.200\\
            5.65440&0.062&5.66120&0.120&5.65680&0.130&5.64031&1.000\\
            5.71680&0.075&5.72330&0.130&5.71790&0.140&5.70994&1.100\\
            5.78130&0.082&5.78850&0.120&5.78410&0.120&5.77038&1.300\\
            5.84840&0.077&5.85320&0.096&5.85060&0.120&5.83233&1.200\\
            5.91560&0.082&5.92110&0.110&5.91790&0.130&5.90019&1.100\\
            5.98490&0.052&5.99070&0.093&5.98680&0.120&5.97059&1.000\\
            6.05470&0.059&6.06050&0.110&6.05680&0.130&6.04613&1.200\\

            \hline
        \end{tabular}
    \end{tabular}
\end{table}
Table \ref{DeltaErr} compares the deltas obtained with RQMC only
with the same number of scenarios as in Section \ref{sec:Num}. We
apply the same LT construction used to estimate the price of the
option and not the one for the integrand function in Equation
(\ref{DelAsMultiDisCorr}). This would not seem to be the optimal
choice, but if we would have applied the LT for the integrand
function in Equation (\ref{DelAsMultiDisCorr}) $M=10$ decomposition
matrices (one for each delta) would be required. This setting would
have increased the CPU time to obtain the LT to at least $1/3$ (even
higher due to the larger number of terms to compute) of the total
time making the estimation less convenient. Table \ref{DeltaErr}
shows that the PCA, LT and KPA approaches perform almost equally in
terms of RMSEs, with the LT giving only slightly better results in
the uncorrelated case. In terms of computational cost the KPA
performs better than the PCA. The CH construction displays RMSEs
that are even $10$ times higher. As explained before, $g$ can be
considered a good approximation for the payoff function in Equation
(\ref{DelAsMultiDisCorr}) but in the Malliavin expression $a(T)$ is
multiplied by a random weight that depends on the Gaussian vector
$\mathbf{Z}$. In contrast, the PCA and the KPA concentrate most of
the variation in the first dimensions of $\mathbf{Z}$. This is the
explanation of the almost equal accuracy of the LT, the PCA and the
KPA.

%%%%%%%%%%%%%%%%%%%%%%%%%%%%%%%%%%%%5555
%%%%%    CONCLUSION
%%%%%%%%%%%%%%%%%%%%%%%%%%%%%%%%%%%%5555
\section{Conclusions}\label{sec:concl}
We have considered the problem of computing the fair price and the
deltas of high-dimensional Asian basket options in a BS market with
time-dependent volatilities. In order to extend the QMC superiority
to high dimensions it is necessary to employ path-generation
techniques with the main purpose to reduce the nominal dimension.
The LT and the PCA constructions try to accomplish this task by the
concept of ANOVA. In the case of time-dependent volatilities in the
BS economy the computational cost of the LT and the PCA cannot be
reduced relying on the properties of the Kronecker product and the
computation is more complex. We have presented a new and fast CH
algorithm for block matrices that remarkably reduces the
computational burden making the LT construction even more convenient
than the PCA. We have introduced a new path-generation technique,
named KPA, that in the applied setting, is as accurate as the PCA
and is even more convenient with respect to the computational cost.
In addition, we proved that the KPA enhances RQMC for the estimation
of the fair price and the calculation of the deltas of Asian basket
options in a BS model with time-dependent volatilities. In this
setting the KPA provides the same accuracy of the LT in the case of
correlated asset returns and in the estimation of the deltas.
Finally, concerning the computation of the sensitivities, we have
extended the procedure adopted by Montero and Kohatsu-Higa
\cite{MKH2003}, based on the Malliavin Calculus,  to the
multi-assets case. All these results can be easily applied to
stochastic and local volatility models that are based on the mixture
of multi-dimensional dynamics for basket options, as done in Brigo
et al.~\cite{BMR}.
%%%%%%%%%%%%%%%%%%%%%%%%%%%%%%%%%%%%%%%%%%%%%%%%%%%%%%%%%%%%%%
%%%%%%%%%%%%%%%%%%%%%%%%%%%%%%%%%%%%%%

%%%%%%%%%%%%%%%%%%%%%%%%%%%%%%%%%%%%%%
%%%%%%%%%%%%%%%%%%%%%%%%%%%%%%%%%%%%%%%%%%%%%%%%%%%%%%%%%%%%%%%%%%%%%%
% Choose one of these bibliography styles, comment out the other:
    % Numbered references, citations like [2]
\bibliographystyle{unsrt}
\bibliography{pierbib}    % Small subset of database
\end{document}